\newtheorem{lemma}{Lemma}
\newtheorem{theorem}{Theorem}
\newtheorem{remark}{Remark}
\title{Efficient Computation of Marton's Error Exponent via Constraint Decoupling
\thanks{The first three authors contributed equally to this work and $\dag$ marked the corresponding author. This work was supported by the National Natural Science Foundation of China (Grant Nos. 12271289 and 62231022).}}
\begin{document}

\newcommand{\thatis}{\textit{i.e.}}
\newcommand{\andsoon}{\textit{etc.}}

\newcommand{\mbbN}{\mathbb{N}} 
\newcommand{\mbbR}{\mathbb{R}} 
\newcommand{\mbbC}{\mathbb{C}} 
\newcommand{\mbbE}{\mathbb{E}}
\newcommand{\mbbP}{\mathbb{P}}

\newcommand{\mbfx}{\mathbf{x}} 
\newcommand{\mbfy}{\mathbf{y}} 
\newcommand{\mbfd}{\mathbf{d}} 
\newcommand{\mbfM}{\mathbf{M}}

\newcommand{\bda}{\boldsymbol{a}} 
\newcommand{\bdd}{\boldsymbol{d}} 
\newcommand{\bdf}{\boldsymbol{f}} 
\newcommand{\bdg}{\boldsymbol{g}} 
\newcommand{\bdp}{\boldsymbol{p}} 
\newcommand{\bdq}{\boldsymbol{q}} 
\newcommand{\bdr}{\boldsymbol{r}}
\newcommand{\bdu}{\boldsymbol{u}}  
\newcommand{\bdv}{\boldsymbol{v}} 
\newcommand{\bdw}{\boldsymbol{w}} 
\newcommand{\bdx}{\boldsymbol{x}} 
\newcommand{\bdy}{\boldsymbol{y}} 
\newcommand{\bdz}{\boldsymbol{z}}

\newcommand{\bdalpha}{\boldsymbol{\alpha}} 
\newcommand{\bdbeta}{\boldsymbol{\beta}} 
\newcommand{\bdphi}{\boldsymbol{\phi}} 
\newcommand{\bdpsi}{\boldsymbol{\psi}}

\newcommand{\mcX}{\mathcal{X}}
\newcommand{\mcP}{\mathcal{P}}
\newcommand{\mcY}{\mathcal{Y}}
\newcommand{\mcC}{\mathcal{C}}
\newcommand{\mcF}{\mathcal{F}}
\newcommand{\mcS}{\mathcal{S}}
\newcommand{\mcA}{\mathcal{A}}
\newcommand{\mcW}{\mathcal{W}}

\newcommand{\mrmd}{\mathrm{d}}
\newcommand{\mrme}{\mathrm{e}}

\newcommand{\bW}{\boldsymbol{W}}
\newcommand{\bA}{\boldsymbol{A}}
\newcommand{\bS}{\boldsymbol{S}}
\newcommand{\bX}{\boldsymbol{X}}
\newcommand{\ba}{\boldsymbol{\alpha}}
\newcommand{\bb}{\boldsymbol{\beta}}
\newcommand{\bg}{\boldsymbol{\gamma}}
\newcommand{\by}{\boldsymbol{y}}
\newcommand{\bo}{\boldsymbol{\omega}}
\newcommand{\bp}{\boldsymbol{p}}
\newcommand{\bM}{\boldsymbol{M}}
\newcommand{\bzero}{\boldsymbol{0}} 
\newcommand{\btwo}{\boldsymbol{2}} 
\newcommand{\bet}{\boldsymbol{\eta}} 

\author[1]{Jiachuan Ye}
\author[1]{Shitong Wu}
\author[1]{Lingyi Chen}
\author[2]{Wenyi Zhang}
\author[3]{Huihui Wu}
\author[1$\dag$]{Hao Wu}
\affil[1]{Department of Mathematical Sciences, Tsinghua University, Beijing 100084, China}
\affil[2]{Department of Electronic Engineering and Information Science, 
\authorcr University of Science and Technology of China, Hefei, Anhui 230027, China}
\affil[3]{Zhejiang Key Laboratory of Industrial Intelligence and Digital Twin, 
\authorcr Eastern Institute of Technology, Ningbo, Zhejiang 315200, P.R. China
\authorcr Email: hwu@tsinghua.edu.cn}

\maketitle

\begin{abstract}
The error exponent in lossy source coding characterizes the asymptotic decay rate of error probability with respect to blocklength. 
The Marton's error exponent provides the theoretically optimal bound on this rate. 
However, computation methods of the Marton's error exponent remain underdeveloped due to its formulation as a non-convex optimization problem with limited efficient solvers. 
While a recent grid search algorithm can compute its inverse function, it incurs prohibitive computational costs from two-dimensional brute-force parameter grid searches. 
This paper proposes a composite maximization approach that effectively handles both Marton's error exponent and its inverse function. 
Through a constraint decoupling technique, the resulting problem formulations admit efficient solvers driven by an alternating maximization algorithm. 
By fixing one parameter via a one-dimensional line search, the remaining subproblem becomes convex and can be efficiently solved by alternating variable updates, thereby significantly reducing search complexity. 
Therefore, the global convergence of the algorithm can be guaranteed. 
Numerical experiments for simple sources and the Ahlswede's counterexample, demonstrates the superior efficiency of our algorithm in contrast to existing methods. 
\end{abstract}

\begin{IEEEkeywords}
Rate-distortion function, Marton's error exponent, alternating maximization. 
\end{IEEEkeywords}

\section{Introduction}

The study of error exponents is a cornerstone in information theory, providing fundamental insights into the asymptotic performance limits of communication and compression systems. 
The earliest theoretical framework for error exponents was developed in the context of channel coding \cite{gallager1965coding,shannon1973dmc}, where Gallager and Shannon \textit{et al.} first quantified how the decoding error probability decays with increasing code length.
This research trajectory later extended to source coding, commencing with lossless compression \cite{jelinek1968probabilistic} before Blahut's advancement to lossy compression \cite{blahut1974hypothesis}, though the latter was subsequently shown to be suboptimal. 
%
%
Marton's derivation of the optimal error exponent for lossy source coding \cite{marton1974error}, which established the tighest asymptotic bound for optimal coding systems, now serves as a fundamental performance criterion and essential tool for practical code design evaluation and optimization.

Despite the theoretical significance of Marton's error exponent, practical computation remains challenging due to the non-convex optimization problem inherent in the definition of Marton's error exponent. 
This difficulty has hindered the development of efficient computational algorithms. 
%
%
Recent work \cite{jitsumatsu2023computation} proposed a grid search algorithm that establishes a two-parameter representation of the continuous inverse function of Marton's error exponent. 
However, the optimization problem requires brute-force two-dimensional search during computation, resulting severe computation inefficiency. 
Consequently, achieving high-precision solutions demands prohibitively expensive computational resources. 
To compute Marton's error exponent efficiently, we reformulate the problem using the dual form of the RD function, yielding an optimization model with a convex objective but non-convex constraints. 
By decoupling the RD subproblem, we obtain an equivalent model where two variables have convex constraints while a third remains non-convex. 
This allows a one-dimensional line search on the non-convex parameter, resulting in a convex subproblem involving two variables, where we have a closed-form solution for one variable solve for the other via the Blahut-Arimoto algorithm \cite{blahut1972computation}. 
Our algorithm could compute both Marton's error exponent and its inverse efficiently with global convergence. 
Experiments show an order-of-magnitude speedup over \cite{jitsumatsu2023computation} and successful handling of discontinuities in Ahlswede's counterexample \cite{ahlswede1990extremal}. 

\section{Marton's Error Exponent}
Consider a discrete memoryless source variable $X\in\mathcal{X}$ and a reconstruction variable $Y\in\mathcal{Y}$, where $|\mathcal{X}|=M$ and $|\mathcal{Y}|=N$ are finite. 
The set of probability distributions on $\mathcal{X}$ is denoted by $\mathcal{P(X)}$. 
Given a distortion measure $d(x,y)\geq 0$ between $\mathcal{X}$ and $\mathcal{Y}$, a distortion level $\Delta\geq 0$, and a marginal distribution $p_X\in\mathcal{P}(\mathcal{X})$, the rate-distortion (RD) function \cite{shannon1948mathematical} is defined by \cite{book_thomas_elements,book_berger_rd}
\fontsize{9pt}{12pt}{
\begin{equation}\label{rate-distortion}
R(\Delta,p_X)=\min\limits_{\substack{p_{Y|X}\in \mathcal{P(Y|X)}: \\ \mathbb{E}_{p_{X,Y}}[d(X,Y)]\leq\Delta}}I(X;Y),
\end{equation}
}where $\mathcal{P(Y|X)}$ is the set of conditional probability distributions on $\mathcal{Y}$ given $\mathcal{X}$, the expectation of $d(X,Y)$ is taken over the joint distribution $p_{X,Y}(x,y)=p_X(x)\cdot p_{Y|X}(y|x)$, and $I(X;Y)$ is the mutual information between $X$ and $Y$. 
We have $R(\Delta,p_X)=0$ if $\Delta\geq \Delta_{\max}:=\min\limits_{y\in\mathcal{Y}}\sum\limits_{x\in\mathcal{X}}p_X(x)d(x,y)$.
%


%
The RD theorem implies that $R(\Delta,p_X)$ is the infimum achievable rate for a given $\Delta$. 
When $R>R(\Delta,p_X)$, there exists a sequence of $k$-length block codes with rates converging to $R$ and the error probability $P^k(R,\Delta,p_X)=P\{x^k\in\mathcal{X}^k:d(x^k,\hat{x}^k)>\Delta\}$ converging to 0, where $\hat{x}^k$ is the reproduction of $x^k$, 
and $d(x^k,\hat{x}^k)=\dfrac{1}{k}\sum\limits_{i=1}^k d(x^k_i,\hat{x}^k_i)$ is the average distortion. 
A problem worth studying is the convergence rate of the error probability when the blocklength $k$ tends to infinity. 
Marton gave such an asymptotic behavior in \cite{marton1974error}. 
Fixing a distortion $\Delta\in[0,\Delta_{\max}]$ and a source distribution $q_X\in\mathcal{P(X)}$, Marton's error exponent is defined by
\fontsize{9pt}{12pt}{
\begin{equation}\label{marton-error-exponent}
E_{\rm M}(R,\Delta,q_X):= \min\limits_{\substack{p_X\in\mathcal{P(X)}: \\ R(\Delta,p_X)\geq R}}D_{\rm KL}(p_X\Vert q_X),
\end{equation}
}where $D_{\rm KL}(p_X\Vert q_X)$ denotes the Kullback-Leibler (KL) divergence between $p_X$ and $q_X$. The error exponent is proven to be optimal \cite{marton1974error}, \textit{i.e.} 
{\fontsize{9pt}{12pt}\selectfont
\begin{align*}E_{\rm M}(R-0,\Delta,q_X)\leq\varliminf\limits_{k\to\infty}\left[-\dfrac{1}{k}\log P^k(R,\Delta,q_X)\right] \\
\leq\varlimsup\limits_{k\to\infty}\left[-\dfrac{1}{k}\log P^k(R,\Delta,q_X)\right]\leq E_{\rm M}(R+0,\Delta,q_X).
\end{align*}
}
That is, the exponential convergence rate is tightly bounded by Marton's error exponent. 
On the other hand, \cite{jitsumatsu2023computation} also gives the inverse function of Marton's error exponent, which has the form
\begin{equation}\label{marton-inverse}
R_{\rm M}(E,\Delta,q_X):=\max\limits_{\substack{p_X\in\mathcal{P}(\mathcal{X}): \\ D_{\rm KL}(p_X\Vert q_X)\leq E}}R(\Delta,p_X).
\end{equation}

The optimality of Marton's error exponent underscores the theoretical importance of developing an efficient computational algorithm. 
However, the computation of this exponent remains challenging, primarily due to the structural properties therein. 
The twofold challenges lie in: (i) the nested optimization problem involving the RD function, and (ii) the non-concavity of $R(\Delta,p_X)$ in $p_X$ \cite{jitsumatsu2023computation}.
These challenges make it inherently difficult to formulate the Marton's error exponent as a tractable optimization model. 
%

%
%
To simplify the constraint in \eqref{marton-error-exponent}, the objective and constraints of the RD function must be decoupled. 
However, directly separating the objective from \eqref{rate-distortion} may introduce non-convex terms.
We therefore employ the dual form of the RD problem introduced later, which enables us to achieve decoupling and formulate a new set of constraints. 
This reformulation yields an analytically tractable yet equivalent model to \eqref{marton-error-exponent}, establishing the basis for our algorithm design.

\section{Composite Maximization Model}
To address the aforementioned difficulties in solving model \eqref{marton-error-exponent}, we employ an alternative definition of the RD function \cite[Theorem 2.5.3]{book_berger_rd} through the dual RD problem
\fontsize{9pt}{12pt}{
\begin{equation}\label{dual-rate-distortion}
R(\Delta,p_X)=\max\limits_{\zeta,\bda\geq 0}-\zeta\Delta+\sum\limits_i p_X(x_i)\log\dfrac{a_i}{p_X(x_i)},
\end{equation}
}
where $a_i$ are non-negative components satisfying the inequality constraints 
$\sum_i a_ie^{-\zeta d(x_i,y_j)}\leq 1,\forall j.$
The maximum is reached if and only if the equality holds. 
In the following, we denote symbols $p_i=p_X(x_i)$, $q_i=q_X(x_i)$, $d_{ij}=d(x_i,y_j)$. 
By doing so, we can reformulate the Marton's error exponent into a nested optimization model by substituting the definition of RD function in \eqref{dual-rate-distortion}:  
{\fontsize{9pt}{12pt}\selectfont
\begin{subequations}\label{er-nested}
\begin{align}
\max\limits_{\bdp}\quad&-D_{\rm KL}(\bdp\Vert\bdq), \label{er-nested-0} \\ 
\text{s.t.}\quad & \sum\limits_i a_i e^{-\zeta d_{ij}}=1,~\forall j;\quad\sum\limits_i p_i=1; \label{er-nested-2} \\
&\max\limits_{\zeta,\bda\geq 0}-\zeta\Delta+\sum\limits_i p_i\log\dfrac{a_i}{p_i}\geq R. \label{er-nested-1}
\end{align}
\end{subequations}
}
To compute $E_{\rm M}(R,\Delta,\bdq)$, we transform the saddle point calculation in \eqref{er-nested} into a maximization problem. 
The following Lemma 1 establishes the equivalence between this nested maximization model and a more tractable composite maximization model, while simultaneously providing the exact parametric form of Marton's error exponent.

\begin{lemma}
The nested maximization model \eqref{er-nested} of $E_{\rm M}(R,\Delta,\bdq)$ is equivalent to the following composite maximization model
\fontsize{9pt}{12pt}
\begin{subequations} \label{er-composite}
\begin{align} 
\max_{\bdp,\zeta,\bda}\quad & f_{\rm E}(\bdp)=-D_{\rm KL}(\bdp\Vert \bdq), \label{er-composite-0} \\
{\rm s.t.}\quad & \sum\limits_i a_i e^{-\zeta d_{ij}}=1,~\forall j;\quad\sum\limits_i p_i=1; \label{er-composite-2} \\ 
& -\zeta\Delta+\sum\limits_i p_i\log \dfrac{a_i}{p_i}\geq R. \label{er-composite-1} 
\end{align}
\end{subequations}
\end{lemma}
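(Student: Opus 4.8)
The plan is to prove the equivalence as a statement about feasible sets, exploiting the fact that the two models carry the \emph{same} objective $f_{\rm E}(\bdp) = -D_{\rm KL}(\bdp \Vert \bdq)$, which depends only on $\bdp$ and is free of $(\zeta,\bda)$. Let $\mcF_{\rm n}$ denote the set of $\bdp$ admissible in the nested model \eqref{er-nested} and let $\mcF_{\rm c}$ denote the projection onto the $\bdp$-coordinate of the feasible set of the composite model \eqref{er-composite}. Since the objective is identical and insensitive to $(\zeta,\bda)$, establishing $\mcF_{\rm n} = \mcF_{\rm c}$ immediately forces the two optimal values to agree, which is exactly the assertion of the Lemma. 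The whole argument thus reduces to two set inclusions, with the conceptual content being that the inner maximization in \eqref{er-nested-1} (``the maximum over $(\zeta,\bda)$ is $\geq R$'') is logically the same as the existential condition encoded by the joint maximization of \eqref{er-composite} (``there exist $(\zeta,\bda)$ making the constraint hold'').

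For the inclusion $\mcF_{\rm c} \subseteq \mcF_{\rm n}$, I would take any $(\bdp,\zeta,\bda)$ feasible for \eqref{er-composite}. The equality $\sum_i a_i e^{-\zeta d_{ij}} = 1$ in \eqref{er-composite-2} implies in particular the dual-RD inequality $\sum_i a_i e^{-\zeta d_{ij}} \leq 1$, so the pair $(\zeta,\bda)$ is admissible in the dual problem \eqref{dual-rate-distortion}. Consequently $R(\Delta,\bdp) \geq -\zeta\Delta + \sum_i p_i \log(a_i/p_i) \geq R$ by \eqref{er-composite-1}, and together with $\sum_i p_i = 1$ this shows $\bdp$ satisfies the nested constraint, i.e.\ $\bdp \in \mcF_{\rm n}$.

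For the reverse inclusion $\mcF_{\rm n} \subseteq \mcF_{\rm c}$, I would take $\bdp \in \mcF_{\rm n}$, so that $R(\Delta,\bdp) \geq R$. Invoking the dual-RD characterization \eqref{dual-rate-distortion} together with the cited fact that the maximum is reached if and only if equality holds, there exists an optimal pair $(\zeta^\star,\bda^\star)$ with $\sum_i a_i^\star e^{-\zeta^\star d_{ij}} = 1$ for all $j$ and $-\zeta^\star\Delta + \sum_i p_i \log(a_i^\star/p_i) = R(\Delta,\bdp) \geq R$. The triple $(\bdp,\zeta^\star,\bda^\star)$ then meets every constraint of \eqref{er-composite}, placing $\bdp \in \mcF_{\rm c}$ and closing the equivalence.

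The main obstacle I anticipate is the reverse inclusion, which rests entirely on the existence of a dual-RD maximizer that \emph{saturates} the constraint to equality for every $j$, precisely as imposed in \eqref{er-composite-2}; rather than re-deriving attainment and the equality condition, I would lean on the characterization of \cite[Theorem 2.5.3]{book_berger_rd} quoted below \eqref{dual-rate-distortion}. Two routine bookkeeping points also need care: adopting the convention $0\log(a_i/0)=0$ so that $-\zeta\Delta+\sum_i p_i\log(a_i/p_i)$ is well defined on the boundary of the simplex where some $p_i$ vanish, and checking that the normalization $\sum_i p_i = 1$ is transcribed identically into both formulations so that the projection $\mcF_{\rm c}$ is taken over the correct domain.
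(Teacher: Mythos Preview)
Your proposal is correct and follows essentially the same approach as the paper: both arguments show that any $\bdp$ feasible for one model is feasible for the other by pairing it with a witness $(\zeta,\bda)$, exploiting that the objective depends only on $\bdp$. Your version is simply more explicit---you frame it as an equality of projected feasible sets and you flag the attainment of the dual-RD maximum (via \cite[Theorem~2.5.3]{book_berger_rd}) and the $0\log 0$ convention, points the paper's three-line proof leaves implicit.
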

\begin{proof}
Suppose \eqref{er-composite} has an optimal triple $(\bdp^\star,\zeta^\star,\bda^\star)$. 
Then $\bdp^\star$ satisfies \eqref{er-nested-1} if we choose $\zeta^\star,\bda^\star$ here, and hence it is feasible for \eqref{er-nested}. On the other hand, suppose \eqref{er-nested} has an optimizer $\bdp^\star$, and then it satisfies \eqref{er-nested-1} with some $\zeta,\bda$, which indicates $(\bdp^\star,\zeta,\bda)$ is feasible for \eqref{er-composite}.
The optimal solutions of both models are feasible for each other, and hence they are equivalent. 
\end{proof}

%
Similarly, the inverse function of Marton's error exponent can be reformulated and equivalently converted into a composite optimization model, as stated in Lemma 2. 
\begin{lemma}
Computing $R_{\rm M}(E,\Delta,\bdq)$ is equivalent to solving the following composite maximization model
\begin{subequations} \label{re-composite}
\begin{align}
\max\limits_{\bdp,\zeta,\bda} \quad & f_{\rm R}(\bdp,\bda)=-\zeta\Delta+\sum\limits_i p_i\log\dfrac{a_i}{p_i}, \label{re-composite-0} \\
{\rm s.t.} \quad & \sum\limits_ia_ie^{-\zeta d_{ij}}=1,~\forall j;\quad\sum\limits_i p_i=1; \label{re-composite-2} \\
& D_{\rm KL}(\bdp\Vert\bdq)\leq E. \label{re-composite-1} 
\end{align}
\end{subequations}
\end{lemma}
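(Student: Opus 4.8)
The plan is to mirror the two-direction feasibility argument used for Lemma~1, the only structural difference being that here the dual rate-distortion expression appears in the \emph{objective} rather than in a constraint. Accordingly, the reformulation rests on the elementary identity that a maximum of a maximum is a joint maximum, combined with the tightness property stated immediately after \eqref{dual-rate-distortion}.

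First I would substitute the dual representation \eqref{dual-rate-distortion} of $R(\Delta,\bdp)$ into the definition \eqref{marton-inverse}, giving
\[
R_{\rm M}(E,\Delta,\bdq)=\max_{\substack{\bdp:\,D_{\rm KL}(\bdp\Vert\bdq)\leq E \\ \sum_i p_i=1}}\ \max_{\substack{\zeta,\bda\geq 0 \\ \sum_i a_i e^{-\zeta d_{ij}}\leq 1,\,\forall j}} f_{\rm R}(\bdp,\bda).
\]
Because the right-hand side is a maximum of a maximum over disjoint blocks of variables, it collapses into a single joint maximization of $f_{\rm R}$ over the triple $(\bdp,\zeta,\bda)$ with all constraints imposed simultaneously. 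This yields a model identical to \eqref{re-composite} except that constraint \eqref{re-composite-2} appears as the inequality $\sum_i a_i e^{-\zeta d_{ij}}\leq 1$ instead of an equality.

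It then remains to show that tightening this inequality to equality does not change the optimal value, which I regard as the crux of the proof. One inclusion is immediate: the equality-constrained feasible set of \eqref{re-composite} is contained in the inequality-constrained one, so the optimal value of \eqref{re-composite} is at most $R_{\rm M}(E,\Delta,\bdq)$. For the reverse direction I would pick an optimizer $\bdp^\star$ attaining $R_{\rm M}(E,\Delta,\bdq)=R(\Delta,\bdp^\star)$ with $D_{\rm KL}(\bdp^\star\Vert\bdq)\leq E$; by the tightness property following \eqref{dual-rate-distortion}, the dual maximum defining $R(\Delta,\bdp^\star)$ is attained at some $(\zeta^\star,\bda^\star)$ for which the constraints hold with equality. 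The triple $(\bdp^\star,\zeta^\star,\bda^\star)$ is then feasible for \eqref{re-composite} and achieves $f_{\rm R}(\bdp^\star,\bda^\star)=R(\Delta,\bdp^\star)=R_{\rm M}(E,\Delta,\bdq)$, so the optimal value of \eqref{re-composite} is at least $R_{\rm M}(E,\Delta,\bdq)$. Combining the two inclusions gives equality of the optimal values and hence the claimed equivalence.

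I expect the only genuinely delicate point to be this equality-versus-inequality step: unlike Lemma~1, where the equality constraint is inherited verbatim from the nested model, here one must invoke the ``if and only if'' tightness of the dual rate-distortion maximum to guarantee that an equality-feasible optimizer of the inner problem exists. The remaining manipulations—the substitution, the collapse of the iterated maximum, and the feasibility bookkeeping in both directions—should all be routine.
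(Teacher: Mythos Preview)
Your proposal is correct and follows exactly the route the paper intends: the paper does not spell out a proof of Lemma~2 at all, merely saying ``Similarly'' after the two-direction feasibility argument of Lemma~1, and your substitution of \eqref{dual-rate-distortion} into \eqref{marton-inverse}, collapse of the iterated maximum, and two-inclusion comparison is precisely that argument adapted to the case where the dual rate-distortion expression sits in the objective. Your explicit handling of the inequality-versus-equality constraint via the tightness remark after \eqref{dual-rate-distortion} is in fact more careful than the paper itself, which silently imposes equality already in \eqref{er-nested-2}.
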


\section{Alternating Maximization with Constraint Decoupling Algorithm}
In this section, we derive an algorithm to solve the Marton's error exponent \eqref{er-composite} and its inverse function \eqref{re-composite}. 
Noting that these two models are both jointly convex in $(\bdp,\bda)$, but non-convex with respect to $\zeta$, our proposed algorithm employs a two-layer optimization framework to address this challenge, consisting of the following main ingredients: 
\begin{itemize}
\item In the inner layer, optimize $(\bdp,\bda)$ for fixed $\zeta$, and output the corresponding optimal value. 
\item In the outer layer, conduct a one-dimensional line search, and select the $\zeta$ that yields the best overall performance across all searched points. 
\end{itemize}
Since the models \eqref{er-composite} and \eqref{re-composite} are derived through constraint decoupling, and the variables are updated via an alternating maximization scheme, we name our algorithm as Alternating Maximization with Constraint Decoupling (AM-CD) algorithm. 
The following two subsections present our algorithms for solving models \eqref{er-composite} and \eqref{re-composite}.

\subsection{AM-CD Algorithm for Marton's Error Exponent}
For fixed $\zeta$, we introduce multipliers $\eta\in\mathbb{R}$, $\bdalpha\in\mathbb{R}^M$, $\lambda\in\mathbb{R}_+$, and then the Lagrangian of \eqref{er-composite} is written as follows: 
{\fontsize{9pt}{12pt}\selectfont
\begin{equation*}
\begin{aligned}
&\mathcal{L}_{\rm E}(\bdp,\bda;\lambda,\eta,\bdalpha) = -\sum\limits_i p_i\log\frac{p_i}{q_i}-\eta\Big(\sum\limits_i p_i-1\Big) \\
&-\sum\limits_j\alpha_j\Big(\sum\limits_i a_ie^{-\zeta d_{ij}}-1\Big)+\lambda\Big(\sum\limits_i p_i\log\dfrac{a_i}{p_i}-\zeta\Delta-R\Big). 
\end{aligned}
\end{equation*}
}
We update the variables $\bdp,\bda$ in an alternating manner, thus updating $\bdp$ with fixed $\bda$, followed by updating $\bda$ with fixed $\bdp$. 
Derivations proceed as follows. 
First, we update the variable ${\bm p}$ by taking the derivative with respect to $p_{i}$, and the first-order condition yields
\begin{equation*}
\dfrac{\partial\mathcal{L}_{\rm E}}{\partial p_i}= -(\lambda+1)(\log p_i+1)+\log q_{i}+\eta+\lambda\log a_i=0.
\end{equation*}
Hence, for given $\bda$, the optimal $\bdp$ has the form
\begin{equation}\label{update_qi}
p_i=e^{\frac{\lambda\log a_{i}+\log q_{i}}{\lambda+1}}\Big/\Big(\sum_{i}e^{\frac{\lambda\log a_{i}+\log q_{i}}{\lambda+1}}\Big).
\end{equation}
Substituting it into \eqref{er-composite-1}, the dual variable $\lambda$ can be updated by finding the root of the following function
\begin{equation}\label{update_lambda}
\begin{aligned}
F_{\rm E}(\lambda)&\triangleq\frac{1}{\lambda+1} \Big(\sum_{i} e^{\frac{\lambda\log a_{i}+\log q_{i}}{\lambda+1}}\log \frac{a_{i}}{q_{i}}\Big) \Big/\Big(\sum\limits_i e^{\frac{\lambda\log a_{i}+\log q_{i}}{\lambda+1}}\Big) \\
&+\log\Big(\sum_{i}e^{\frac{\lambda\log a_{i}+\log q_{i}}{\lambda+1}}\Big)-\zeta\Delta-R.
\end{aligned}
\end{equation}
As derived in Appendix A, straightforward calculations yield $F_{\rm E}(0)<0$ and $F_{\rm E}'(\lambda)>0$. 
The monotonicity of $F_{\rm E}(\lambda)$ thus guarantees the existence of a non-negative root, which can be efficiently computed via Newton's method. 
Next, when $\bdp$ and $\lambda$ are fixed, the Lagrangian only depends on $\bda$ with its dual variable $\bdalpha$, \textit{i.e.}
{\fontsize{9pt}{12pt}\selectfont
\begin{align*}
\widetilde{\mathcal{L}_{\rm E}}&=\lambda\Big(\sum\limits_i p_i\log\dfrac{a_i}{p_i}-\zeta\Delta-R\Big)-\sum\limits_j\alpha_j\Big(\sum\limits_i a_ie^{-\zeta d_{ij}}-1\Big).
\end{align*}
}
Notably, optimizing the Lagrangian is exactly equivalent to solving the dual RD problem. 
Therefore, we adopt the following approach: 
we introduce auxiliary variables $\bdw$ and $\bdr$ where $w_{ij}=P_{Y|X}(y_j|x_i)$, $r_j=P_Y(y_j)$, and turn to solve the original RD problem with the Blahut-Arimoto (BA) algorithm \cite{blahut1972computation}, where $\bdw$ and $\bdr$ are updated in an alternate fashion as follows: 
{\fontsize{9pt}{12pt}\selectfont
\begin{align*}
w_{ij}=e^{-\zeta d_{ij}}r_j\Big/\Big(\sum\limits_j e^{-\zeta d_{ij}}r_j\Big),\quad r_j=\sum\limits_i p_iw_{ij}.
\end{align*}
}
Once the optimal $\bdw$ and $\bdr$ are obtained, we can update $\bda$ as follows\cite{book_berger_rd}, thereby solving the dual RD problem:  
{\fontsize{9pt}{12pt}\selectfont
\begin{equation}\label{update-a}
a_i=p_i\Big/\Big(\sum\limits_j e^{-\zeta d_{ij}}r_j\Big).
\end{equation}
}

\begin{algorithm}[htbp]
\begin{algorithmic}[0]
\caption{AM-CD Algorithm for Marton's Error Exponent \eqref{marton-error-exponent}}
\label{algorithm1}
\REQUIRE{Distortion metric $\bdd$, source distribution $\bdq$, rate threshold $R$, distortion threshold $\Delta$; mesh size $\delta$}
\FOR{$k=1:K$}
    \STATE Set $\zeta_k=k\delta$, $r_j=1/N$
    \FOR{${\rm outer\_iter}=1:{\rm max\_outer\_iter}$}
        \STATE Update $\lambda$ by solving the root of $F_{\rm E}(\lambda)$
        \STATE $p_i\leftarrow e^{\frac{\lambda\log a_{i}+\log q_{i}}{\lambda+1}}\Big/\Big(\sum\limits_i e^{\frac{\lambda\log a_{i}+\log q_{i}}{\lambda+1}}\Big)$
        \FOR{${\rm inner\_iter}=1:{\rm max\_inner\_iter}$}
            \STATE $w_{ij}\leftarrow e^{-\zeta_k d_{ij}}r_j\Big/\Big(\sum\limits_j e^{-\zeta_k d_{ij}}r_j\Big)$ 
            \STATE $r_j\leftarrow \sum\limits_i w_{ij}p_{i}$
        \ENDFOR 
        \STATE $a_i\leftarrow p_i\Big/\Big(\sum\limits_j e^{-\zeta_k d_{ij}}r_j\Big)$ 
    \ENDFOR
    \STATE $E_k=-D_{\rm KL}(\bdp\Vert\bdq)$
\ENDFOR
\RETURN $E=\max\limits_{1\leq k\leq K}E_k$
\end{algorithmic}
\end{algorithm}
\begin{remark}
The first term of the Lagrangian $\widetilde{\mathcal{L}_{\rm E}}$ matches the objective of the dual RD problem \eqref{dual-rate-distortion}, while the second term encodes the constraints of $\bda$. This is precisely the Lagrangian of \eqref{dual-rate-distortion} up to a factor $\lambda$. Thus, optimizing it solves the dual RD problem.
Moreover, the optimal $\bda$ can be analytically derived from the optimizer $(\bdw,\bdr)$ of the primal RD problem, and thus we solve it before updating $\bda$. 
\end{remark}

Building upon the aforementioned algorithm design and incorporating a one-dimensional line search over $\zeta$, we present the complete optimization procedure in Algorithm~\ref{algorithm1}.
We now analyze the convergence of the proposed AM-CD algorithm, which is formally stated in the following theorem. 
\begin{theorem}
The limit point $(\bdp^\star,\bda^\star)$ of the sequence $\{(\bdp^{(n)},\bda^{(n)})\}$ is the optimizer of \eqref{er-composite} with fixed $\zeta$, where $(\bdp^{(n)},\bda^{(n)})$ is the point obtained in the $n$-th iteration.
\end{theorem}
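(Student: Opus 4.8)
The plan is to prove the result by the classical recipe for alternating (block coordinate) maximization on a convex program: establish (i) monotone convergence of the objective, (ii) compactness to guarantee a limit point, and (iii) that any limit point is a joint fixed point of the two block updates whose defining equations coincide with the Karush--Kuhn--Tucker (KKT) system of \eqref{er-composite} at fixed $\zeta$; convexity then upgrades stationarity to global optimality. The two structural facts I would rely on are that, with $\zeta$ fixed, \eqref{er-composite} maximizes the strictly concave objective $f_{\rm E}(\bdp)=-D_{\rm KL}(\bdp\Vert\bdq)$ over a convex feasible set (the coupling term $\sum_i p_i\log(a_i/p_i)$ is jointly concave in $(\bdp,\bda)$, being a sum of perspectives of the logarithm), and that each block update is an exact maximization of the corresponding partial problem.

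First I would prove monotonicity of $\{f_{\rm E}(\bdp^{(n)})\}$. The key observation is that the $\bda$-update solves the dual RD problem, hence maximizes $\sum_i p_i\log(a_i/p_i)$ over feasible $\bda$; this can only increase the left-hand side of the coupling constraint \eqref{er-composite-1}, so the previous iterate $\bdp^{(n)}$ remains feasible for the subsequent $\bdp$-subproblem that uses the updated $\bda^{(n)}$. Since the $\bdp$-update \eqref{update_qi}, with $\lambda$ chosen as the root of $F_{\rm E}$ in \eqref{update_lambda}, is the exact constrained maximizer of $f_{\rm E}$ over this feasible set, we obtain $f_{\rm E}(\bdp^{(n+1)})\geq f_{\rm E}(\bdp^{(n)})$. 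As $f_{\rm E}\leq 0$ is bounded above, the objective sequence converges.

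Next I would extract a limit point and characterize it. The simplex constraint $\sum_i p_i=1$ and the equality constraints $\sum_i a_i e^{-\zeta d_{ij}}=1$ with $a_i\geq 0$ confine $(\bdp^{(n)},\bda^{(n)})$ to a compact set, so a convergent subsequence exists with limit $(\bdp^\star,\bda^\star)$. Because the update maps are continuous --- the closed-form \eqref{update_qi}, the $\lambda$-root of the strictly monotone $F_{\rm E}$, and the BA iteration together with \eqref{update-a} --- the limit is a joint fixed point of both block updates. I would then identify these fixed-point equations with the stationarity conditions of the Lagrangian $\mathcal{L}_{\rm E}$: the $\bdp$-equation is $\partial\mathcal{L}_{\rm E}/\partial p_i=0$ with multiplier $\lambda^\star\geq 0$ and active coupling constraint, while the dual-RD optimality of $\bda^\star$ supplies stationarity in $\bda$ after rescaling its multipliers by $\lambda^\star$ (exactly the ``up to a factor $\lambda$'' relation noted in Remark 1). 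Together with primal and dual feasibility and complementary slackness, this is the full KKT system; since the program is convex and satisfies Slater's condition, KKT is sufficient for global optimality. Finally, strict concavity of $f_{\rm E}$ forces $\bdp^\star$ to be the unique maximizer, so the whole sequence $\{\bdp^{(n)}\}$ converges, not merely a subsequence.

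The main obstacle is step (iii): verifying that the two independently derived block-stationarity conditions assemble into a single consistent KKT system for \eqref{er-composite}. This requires confirming that the multiplier rescaling $\alpha_j=\lambda^\star\tilde\alpha_j$ is legitimate, which in turn needs $\lambda^\star>0$ (the non-degenerate regime $R>R(\Delta,\bdq)$; the degenerate case $\lambda^\star=0$ collapses to $\bdp^\star=\bdq$ and must be handled separately), together with a constraint qualification ensuring stationarity is equivalent to optimality. A secondary technical point is to confirm that the inner BA loop converges to the exact dual-RD maximizer, so that the $\bda$-update is genuinely an exact block maximization rather than an approximate one.
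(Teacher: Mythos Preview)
Your approach is sound but genuinely different from the paper's. The paper does not argue monotonicity via the feasibility-preservation observation you give; instead it performs a direct computation to obtain the quantitative ascent identity
\[
f_{\rm E}(\bdp^{(n+1)}) - f_{\rm E}(\bdp^{(n)}) = (\lambda^{(n+1)}+1)\,D_{\rm KL}(\bdp^{(n)}\Vert\bdp^{(n+1)}) + \lambda^{(n+1)}\sum_j r_j^{(n)} D_{\rm KL}(\bdv_j^{(n)}\Vert\bdv_j^{(n-1)}),
\]
and then applies Pinsker's inequality to obtain $\sum_n \Vert \bdp^{(n)}-\bdp^{(n+1)}\Vert_1^2<\infty$, from which it passes to a limit point and checks the KKT system. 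Your route avoids this algebra entirely, replacing it with compactness, continuity of the block maps, and strict concavity. The paper's computation buys an explicit successive-difference estimate; your argument is more conceptual and transfers readily to other alternating schemes, at the price of having to justify continuity of the BA limit as a function of $\bdp$.

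There is, however, one step that needs tightening. In (iii) you assert that continuity of the update maps makes any subsequential limit a joint fixed point. Continuity alone does not give this: if only $(\bdp^{(n_k)},\bda^{(n_k)})\to(\bdp^\star,\bda^\star)$, continuity of the $\bdp$-update map $G$ yields $\bdp^{(n_k+1)}\to G(\bda^\star)$, but not $G(\bda^\star)=\bdp^\star$. The missing link is precisely the strict concavity you defer to the last paragraph: since the monotone objective sequence converges, $f_{\rm E}(G(\bda^\star))=\lim f_{\rm E}(\bdp^{(n_k+1)})=f_{\rm E}(\bdp^\star)$, and since $\bdp^\star$ is feasible for the $\bdp$-subproblem at $\bda^\star$ (by your own feasibility-preservation argument, passed to the limit) while $G(\bda^\star)$ is its \emph{unique} maximizer, equality of objective values forces $G(\bda^\star)=\bdp^\star$. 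So strict concavity must be invoked inside the fixed-point argument, not only afterwards to upgrade subsequential to full-sequence convergence.
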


\textit{Sketch of Proof: } The proof proceeds via three key steps: direct computation gives the descent estimation of the objective; the sequence $\{(\bdp^{(n)},\bda^{(n)})\}$ has a limit point $(\bdp^\star,\bda^\star)$ ensured by the Pinsker's inequality; $(\bdp^\star,\bda^\star)$ is proven to satisfy the Karush-Kuhn-Tucker (KKT) conditions. 
Details of derivation are presented in Appendix B. 

\subsection{The AM-CD Algorithm for the Inverse Function of Marton's Error Exponent}
For fixed $\zeta$, 
we introduce multipliers $\eta\in\mathbb{R}$, $\bdalpha\in\mathbb{R}^M$, $\xi\in\mathbb{R}_+$, and then the Lagrangian of \eqref{re-composite} is written as follows: 
\begin{equation*}
\begin{aligned}
\mathcal{L}_{\rm R}&(\bdp,\bda;\xi,\eta,\bdalpha)=-\zeta\Delta+\sum\limits_i p_i\log\dfrac{a_i}{p_i}-\eta\Big(\sum\limits_i p_i-1\Big) \\
&+\sum\limits_j\alpha_j\Big(\sum\limits_i a_ie^{-\zeta d_{ij}}-1\Big)-\xi\Big(\sum\limits_i p_i\log\dfrac{p_i}{q_i}-E\Big).
\end{aligned}
\end{equation*}
Similarly, we derive the AM-CD algorithm to update $\bdp$ and $\bda$ in an alternate fashion. 
First, we update $\bdp$ by taking the partial derivatives with respect to each $p_i$, and then the first-order condition yields
{\fontsize{9pt}{12pt}\selectfont
\begin{align*}
\dfrac{\partial\mathcal{L}}{\partial p_i}=-1-\log p_i+\log a_i-\eta-\xi\left(1+\log p_i-\log q_i\right)=0.
\end{align*}}
Hence, for given ${\bm a}$, the optimal $\bdp$ has the form 
\begin{equation}\label{re-update-p}
p_i=e^{\frac{\log a_i+\xi\log q_i}{1+\xi}}\Big/\Big(\sum\limits_i e^{\frac{\log a_i+\xi\log q_i}{1+\xi}}\Big).
\end{equation}
Substituting it into \eqref{re-composite-1}, we can update $\xi$ by finding the root of 
\begin{equation}\label{update-xi}
\begin{aligned}
F_{\rm R}(\xi)&\triangleq\dfrac{1}{1+\xi}\Big(\sum\limits_i e^{\frac{\log a_i+\xi\log q_i}{1+\xi}}\log\dfrac{a_i}{q_i}\Big)\Big/\Big(\sum\limits_i e^{\frac{\log a_i+\xi\log q_i}{1+\xi}}\Big) \\
&-\log\Big(\sum\limits_i e^{\frac{\log a_i+\xi\log q_i}{1+\xi}}\Big)-E.
\end{aligned}
\end{equation}
We have $F_{\rm R}'(\xi)<0$ as derived in Appendix A, and hence $F_{\rm R}(\xi)$ is monotonic. 
If $F_R(0)\geq 0$, then $F_R(\xi)$ is guaranteed to have a non-negative root, which can be efficiently computed via Newton's method. 
In the case of $F_R(0)<0$, the constraint \eqref{re-composite-1} corresponding to $\xi$ has already been satisfied, and thus we can simply set $\xi=0$ without root finding process. 
Next, when $\bdp$ and $\lambda$ are fixed, an analogous analysis to the previous subsection demonstrates that we can apply the BA algorithm to solve the primal RD problem, with $\bda$ updated thereafter through the update rule \eqref{update-a}. 

Based upon the algorithmic design, we conduct a one-dimensional line search over $\zeta$ to finish the optimization procedure, as presented in Algorithm~\ref{algorithm2}. 

Following the approach in the previous section, we state the convergence behavior of our algorithm in the following theorem, with the proof presented in Appendix B. 
\begin{theorem}
The limit point $(\bdp^\star,\bda^\star)$ of the sequence $\{(\bdp^{(n)},\bda^{(n)})\}$ is the optimizer of \eqref{re-composite} with fixed $\zeta$, where $(\bdp^{(n)},\bda^{(n)})$ is the point obtained in the $n$-th iteration. 
\end{theorem}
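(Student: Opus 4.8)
The plan is to exploit the joint convexity of \eqref{re-composite} in $(\bdp,\bda)$ for fixed $\zeta$ and to treat AM-CD as an exact two-block coordinate ascent, reusing the three-step template behind Theorem~1. First I would establish monotone ascent of the objective; then I would extract a limit point $(\bdp^\star,\bda^\star)$ and show that successive iterates coalesce; finally I would verify that $(\bdp^\star,\bda^\star)$ satisfies the KKT system of \eqref{re-composite}, which by convexity certifies global optimality.

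For the ascent step, I would argue that each block update is an \emph{exact} maximization of $f_{\rm R}$ over one block with the other fixed. The objective $f_{\rm R}(\bdp,\bda)=-\zeta\Delta+\sum_i p_i\log(a_i/p_i)$ is concave in $\bdp$ (the entropy term $-\sum_i p_i\log p_i$ is concave, the remaining term linear) over the convex region cut out by the simplex and the KL ball \eqref{re-composite-1}; hence the first-order condition \eqref{re-update-p} together with the tuning of $\xi$ via the root of $F_{\rm R}$ (using $F_{\rm R}'(\xi)<0$ from Appendix~A and the branch $\xi=0$ when \eqref{re-composite-1} is slack) returns the global $\bdp$-maximizer. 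Likewise $f_{\rm R}$ is concave in $\bda$ under the affine constraints \eqref{re-composite-2}, and by the preceding remark the $\bda$-update through the BA inner loop and \eqref{update-a} solves the corresponding dual RD problem exactly. Consequently $f_{\rm R}(\bdp^{(n)},\bda^{(n)})$ is nondecreasing, and since it is bounded above by $R(\Delta,p_X)\le\log N$ it converges.

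Next I would secure a limit point. The iterates $\bdp^{(n)}$ lie in the compact simplex and $\bda^{(n)}$ in a compact set fixed by the equalities \eqref{re-composite-2} and nonnegativity, so Bolzano--Weierstrass yields a convergent subsequence with limit $(\bdp^\star,\bda^\star)$. To upgrade this to a genuine fixed point I would derive a descent estimate lower-bounding the per-step increase of $f_{\rm R}$ by a KL divergence between consecutive $\bdp$-iterates; Pinsker's inequality then converts this into an $\ell_1$ bound, and summability of the telescoping, bounded increments forces $\|\bdp^{(n+1)}-\bdp^{(n)}\|_1\to 0$, so the limit point is invariant under the alternating update map.

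Finally, passing to the limit in the stationarity relations \eqref{re-update-p} and \eqref{update-a}, together with primal feasibility and complementary slackness of the KL constraint, recovers the full KKT system of \eqref{re-composite} at $(\bdp^\star,\bda^\star)$. Since the problem is convex and Slater's condition holds (e.g.\ $\bdp=\bdq$ gives $D_{\rm KL}(\bdp\Vert\bdq)=0<E$, strictly interior), KKT is sufficient for global optimality, so $(\bdp^\star,\bda^\star)$ is the optimizer. I expect the main obstacle to be the rigorous handling of the KL-constraint branch in the limit, namely showing that the complementary-slackness dichotomy ($\xi^\star>0$ with the constraint active versus $\xi^\star=0$ with it slack) survives the passage to the limit, and ensuring that the BA inner loop delivers the exact $\bda$-maximizer so that the two-block ascent is genuinely exact rather than merely improving.
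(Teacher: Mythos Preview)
Your proposal is correct and follows essentially the same three-step template as the paper's proof in Appendix~B: a KL descent estimate (the paper computes explicitly that $f_{\rm R}(\bdp^{(n+1)},\bda^{(n+1)})-f_{\rm R}(\bdp^{(n)},\bda^{(n)})=(\xi^{(n+1)}+1)D(\bdp^{(n)}\Vert\bdp^{(n+1)})+\sum_j r_j^{(n+1)}D(\bdv_j^{(n+1)}\Vert\bdv_j^{(n)})$), Pinsker's inequality plus boundedness of $f_{\rm R}$ to force convergence of $\{\bdp^{(n)}\}$, and then extraction of a limit for $\bda$ via a convergent subsequence of the bounded BA iterates $(\bdw^{(n)},\bdr^{(n)})$ together with the update formula \eqref{update-a}, concluding by KKT and joint convexity. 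The only cosmetic difference is that the paper obtains monotonicity and the KL lower bound in a single explicit calculation rather than first invoking abstract exact block-ascent.
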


\section{Experiments and Discussions}
The performance of the proposed algorithm is validated through some experiments in this section. 
These experiments have been conducted by Matlab 2023b on a laptop with 16G RAM and one Intel(R) Core(TM) i7-12700H CPU @ 2.30GHz. 
\subsection{Performance on Classical Distributions}
We consider the Gaussian source with squared error distortion and the Laplacian source with absolute error distortion in this subsection. 
We truncate the continuous distribution with an interval $[-L,L]$ and discretize it with a uniform grid $x_i=-L+(i-1/2)\delta$, $\delta=2L/M$, $1\leq i\leq M$, where $N=M=100$ in our cases. 
We take $\Delta=0.4$ for both cases. 
We set the Gaussian source with mean $\mu=0$, variance $\sigma=1$, and the Laplacian source with scaling parameter $b=1$. 
\begin{algorithm}[hbtp]
\begin{algorithmic}[0]
\caption{AM-CD Algorithm for the Inverse Function of Marton's Error Exponent\eqref{marton-inverse}}
\label{algorithm2}
\REQUIRE{Distortion metric $\bdd$, source distribution $\bdq$, error threshold $E$, distortion threshold $\Delta$; mesh size $\delta$}
\FOR{$k=1:K$}
    \STATE Set $\zeta_k=k\delta$, $r_j=1/N$
    \FOR{${\rm outer\_iter}=1:{\rm max\_outer\_iter}$}
        \STATE Update $\xi$ by solving the root of $F_{\rm R}(\xi)$
        \STATE $p_i\leftarrow e^{\frac{\log a_i+\xi\log q_i}{1+\xi}}\Bigg/\left(\sum\limits_i e^{\frac{\log a_i+\xi\log q_i}{1+\xi}}\right)$
        \FOR{${\rm inner\_iter}=1:{\rm max\_inner\_iter}$} 
            \STATE $w_{ij}\leftarrow e^{-\zeta_k d_{ij}}r_{j}\Big/\Big(\sum\limits_j e^{-\zeta_k d_{ij}}r_j\Big)$ 
            \STATE $r_j\leftarrow \sum\limits_i w_{ij}p_i$
        \ENDFOR 
        \STATE $a_i\leftarrow p_{i}\Big/\Big(\sum\limits_j e^{-\zeta_k d_{ij}}r_{j}\Big)$ 
    \ENDFOR
    \STATE $R_k=-\zeta_k\Delta+\sum\limits_i p_i\log\dfrac{a_i}{p_i}$
\ENDFOR
\RETURN $R=\max\limits_{1\leq k\leq K}R_k$
\end{algorithmic}
\end{algorithm}

We first compute the inverse function of Marton's error exponent, and compare the performance of our algorithm against that of grid search (Grid) algorithm \cite{jitsumatsu2023computation}. 
The results are reported in the Table I, where each result is obtained by repeating the experiment 100 times.
As shown in Table~\ref{table:re-comparison}, our AM-CD algorithm offers tens of times of speedup over the grid search algorithm. 
The significant efficiency gain stems from replacing the two-dimensional grid search with a hybrid one-dimensional line search and Newton's method, which also improves accuracy: Newton's search precisely computes the multiplier $\xi$ and outputs the rate $R$, while the grid search algorithm's accuracy depends on quadratically expensive grid refinement. 
%


%
\begin{table}[ht]
    \centering
    \caption{Comparison of efficiency between the AM-CD algorithm and the grid search algorithm on computing the inverse function of Marton's error exponent. }
    \begin{center}
    
    \begin{tabular}{|c|c|c|c|c|}
    \hline
    \multirow{2}{*}{Source} & \multirow{2}{*}{$(E,R)$} & \multicolumn{2}{c|}{Time(s)} & \multicolumn{1}{c|}{Speedup} \\ \cline{3-4}
    \multicolumn{1}{|c|}{} & \multicolumn{1}{c|}{} & AM-CD & Grid & \multicolumn{1}{c|}{Ratio} \\ \cline{1-5}
    \multirow{3}{*}{Gaussian} & $(0.10,0.7440)$ & 2.2174 & 82.4059 & 37.1836 \\
    \multicolumn{1}{|c|}{} & $(0.15,0.8007)$ & 1.9647 & 83.8407 & 42.6735 \\
    \multicolumn{1}{|c|}{} & $(0.20,0.8466)$ & 1.9529 & 83.3987 & 42.7051 \\ \cline{1-5}
    \multirow{3}{*}{Laplacian} & $(0.20,1.3433)$ & 2.6715 & 82.1908 & 30.7658 \\
    \multicolumn{1}{|c|}{} & $(0.25,1.3836)$ & 2.4574 & 82.4103 & 33.5356 \\
    \multicolumn{1}{|c|}{} & $(0.30,1.4170)$ & 2.3334 & 82.6220 & 35.4084 \\ \hline
    \end{tabular}
    \label{table:re-comparison}
    \end{center}
    \footnotesize{Note: The AM-CD algorithm stops if the decrease of $R$ between adjacent steps is less than $10^{-5}$. 
    For the AM-CD algorithm, we search $\zeta$ over a uniform grid of $[0,5]$ of size $100$. 
    For the grid search algorithm\cite{jitsumatsu2023computation}, we search over a uniform grid of $[0,5]\times[0,5]$ of size $100\times 100$. }
\end{table}
%


%
Different from the grid search algorithm, our approach can also compute Marton's error exponent directly. 
The results are reported in Table \ref{table:er}, where the same distributions are taken. 
Once again, our algorithm achieves fast computation of Marton's error exponent, with execution time comparable to that of computing its inverse function in the previous experiment.


\begin{table}[htbp]
    \centering
    \caption{Efficiency of the AM-CD algorithm on computing Marton's error exponent. }
    \begin{center}
    \begin{tabular}{|c|c|c|}
    \hline
    Source & $(R,E)$ & Time(s) \\ \cline{1-3}
    \multirow{3}{*}{Gaussian} & $(0.6,0.0222)$ & 0.9105 \\
    \multicolumn{1}{|c|}{} & $(0.7,0.0693)$ & 1.8439 \\
    \multicolumn{1}{|c|}{} & $(0.8,0.1492)$ & 2.4565 \\ \cline{1-3}
    \multirow{3}{*}{Laplacian} & $(1.1,0.0359)$ & 2.7047 \\
    \multicolumn{1}{|c|}{} & $(1.2,0.0816)$ & 3.0648 \\
    \multicolumn{1}{|c|}{} & $(1.3,0.1554)$ & 3.8876 \\ \hline
    \end{tabular}
    \end{center}
    \label{table:er}
    \footnotesize{Note: The algorithm stops if the decrease of $E$ between adjacent steps is less than $10^{-5}$. 
    We search $\zeta$ over a uniform grid of $[0,5]$ of size $100$. }
\end{table}


%
We also plot Marton’s error exponent and its inverse function computed by our AM-CD algorithm. As shown in Figures \ref{fig:er} and \ref{fig:re}, the algorithm consistently produces smooth, complete curves for both functions, demonstrating the method’s effectiveness. 
\begin{figure}[htbp]
    \centering
    \includegraphics[width=0.9\linewidth]{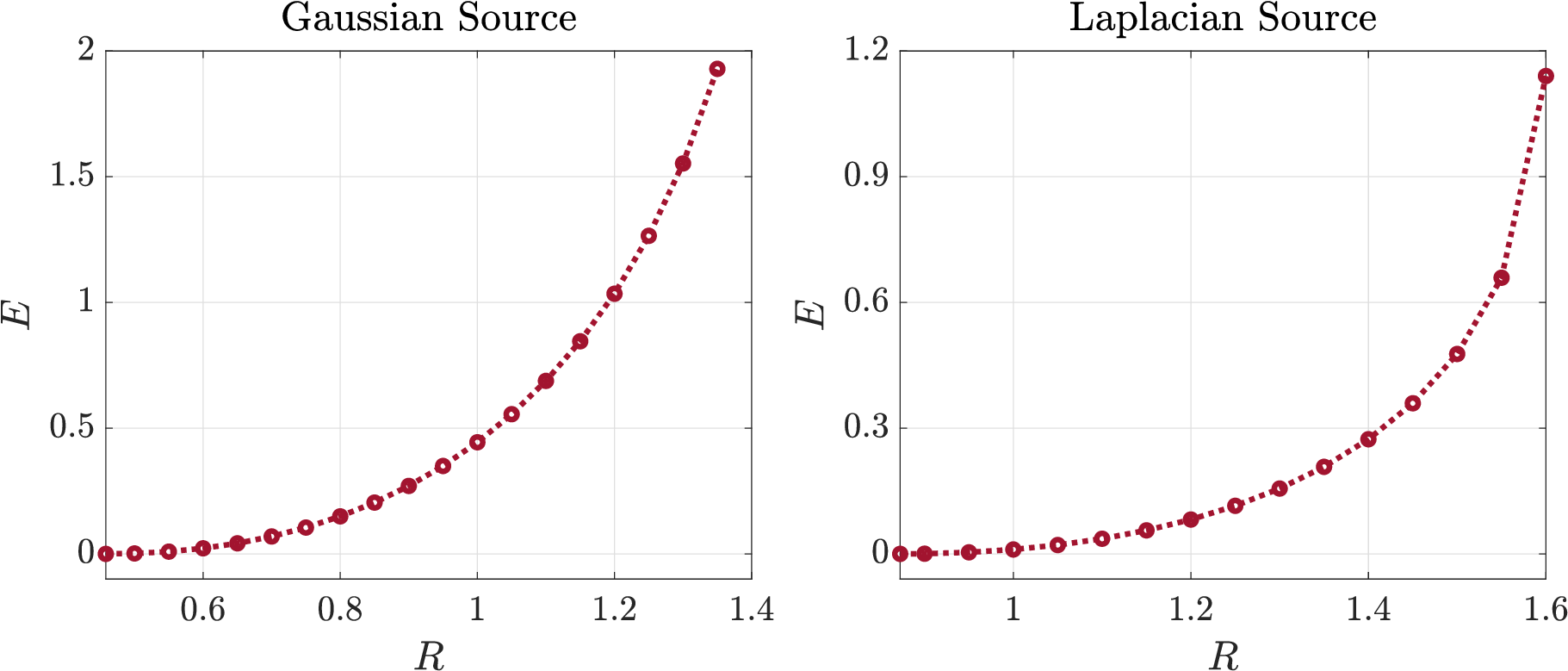}
    \caption{The curve of Marton's error exponent. }
    \label{fig:er}
\end{figure}
\begin{figure}[htbp]
    \centering
    \includegraphics[width=0.9\linewidth]{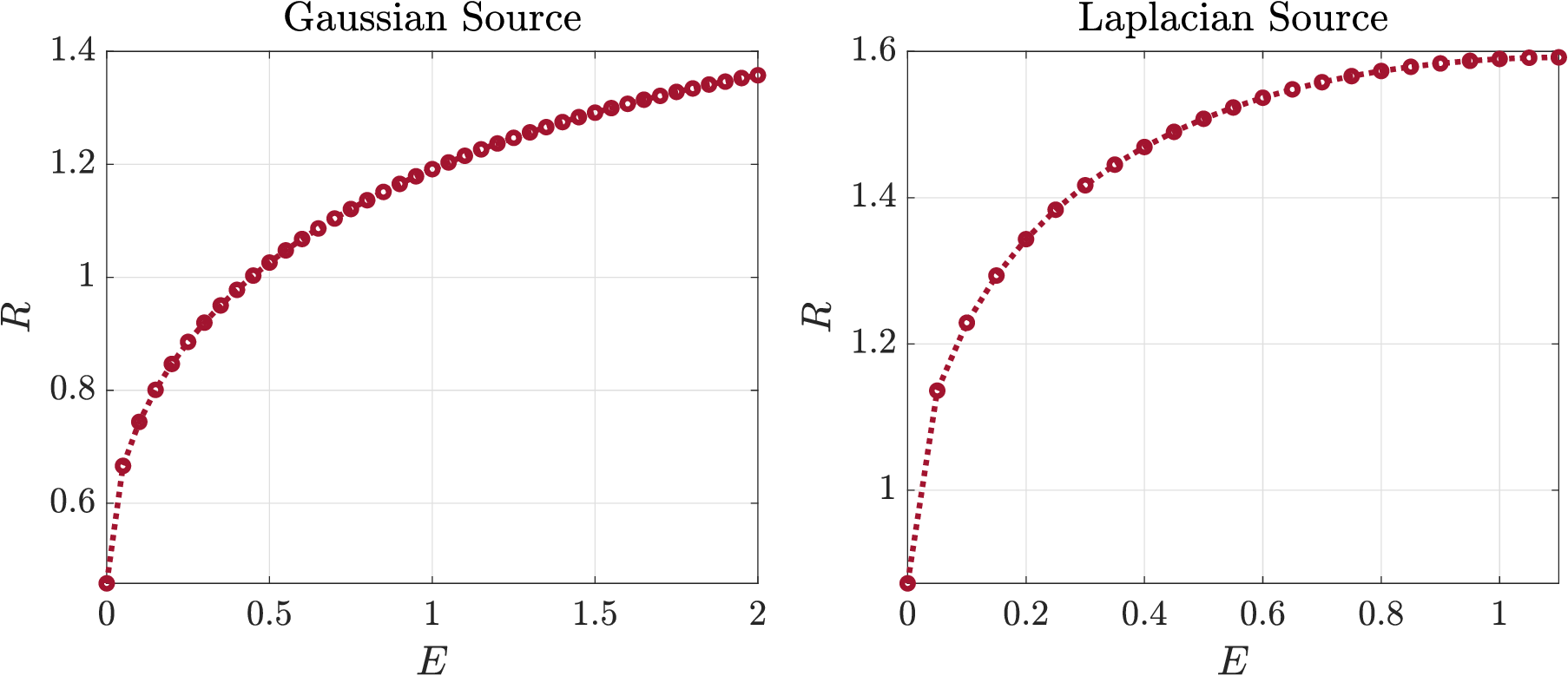}
    \caption{The curve of the inverse function of Marton's error exponent. }
    \label{fig:re}
\end{figure}

\subsection{Ahlswede's Counterexample}
This subsection discusses Ahlswede's counterexample where Marton's error exponent is a discontinuous function of rate $R$.
Let $\mathcal{Y}=\mathcal{X}$ with $\mathcal{X}$ partitioned into $\mathcal{X}_A$ and $\mathcal{X}_B$. 
We follow the definition in \cite[Section IV]{jitsumatsu2023computation}, where $d(x,y)$ is the distortion metric between $\mathcal{X}$ and $\mathcal{Y}$, $Q_A$ and $Q_B$ are uniform distributions on $\mathcal{X}_A$ and $\mathcal{X}_B$. 
For $\xi\in[0,1]$, define $Q_\xi=\xi Q_A+(1-\xi)Q_B$. 
Then the theoretical curve is derived from \cite[Theorem 3]{jitsumatsu2023computation} as follows:   
\[E_{\rm M}(R,\Delta,Q_\xi)
=\min\limits_{\substack{\lambda\in[0,1] \\ R(\Delta|Q_\lambda)\geq R}}D_2(\lambda\Vert\xi),\]
where $D_2(p\Vert q)=p\log(p/q)+(1-p)\log((1-p)/(1-q))$. 

We assume $|\mathcal{X}_B|=|\mathcal{X}_A|^3$, $|\mathcal{X}_A|=8$, $\Delta=0.254$, and $a=0.340$, where $a$ is a parameter defining $d(x,y)$. 
We then apply our algorithm to compute the inverse function of Marton's error exponent for the parameters defined by $q_X=Q_\xi$ with $\xi=0.01$. 
The theoretical curve and the reproduced curve are shown in Figure \ref{fig:ehlswede}. 
\begin{figure}[htbp]
\centering
\includegraphics[width=0.45\textwidth]{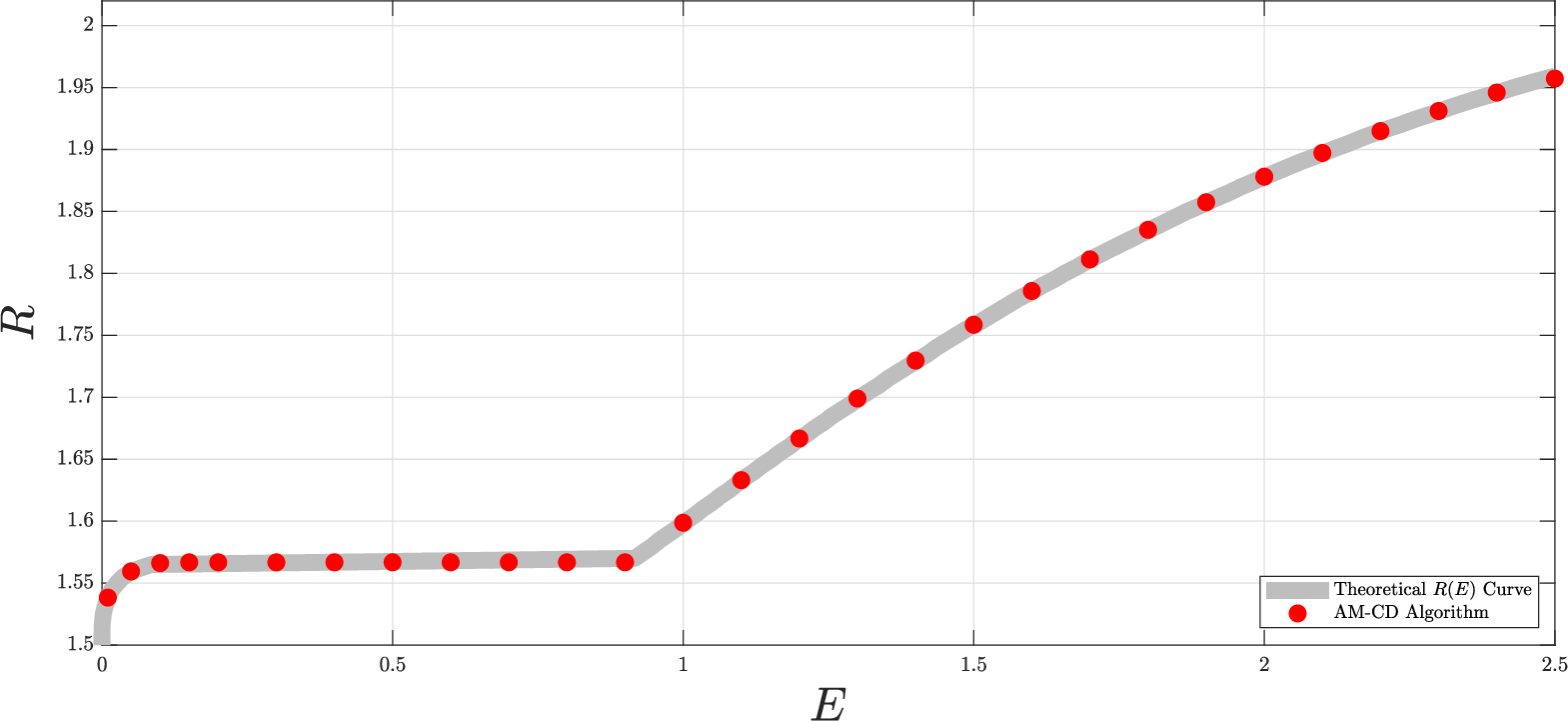}
\caption{The theoretical and reproduced curves of Marton's error exponent in Ahlswede's counterexample. }
\label{fig:ehlswede}
\end{figure}
At $R=1.5669$, the error exponent jumps from $E_L=0.1255$ to $E_R=0.9147$, demonstrating the discontinuity of Marton's error exponent at this point. 
Our algorithm successfully recovers the segment which is not continuous with respect to $R$. 
Moreover, our algorithm fits the theoretical curve well in continuous segments. 

\section{Conclusion}
This work presents an efficient computational approach for Marton’s error exponent and its inverse. 
By reformulating Marton’s error exponent as a nested maximization problem, we apply constraint decoupling to derive an equivalent composite maximization model. 
This novel formulation enables the development of the Alternating Maximization with Constraint Decoupling (AM-CD) algorithm, which replaces the computationally expensive two-dimensional grid search algorithm with a more efficient combination of one-dimensional line search and convex optimization. 
Numerical experiments validate the effectiveness of our method, and its robust performance on Ahlswede’s counterexample further confirms its reliability. 
We believe that this approach holds promise for broad applicability and anticipate its extension to computing other error exponents and strong converse exponents, such as those of Dueck and K\"orner \cite{dueckkorner1979reliability}, of Csiszar and K\"orner \cite{csiszar2011information}, and of Tridenski and Zamir \cite{tridenski2017exponential}.

\newpage
\bibliographystyle{IEEEtran}
\bibliography{ref}

\newpage
\indent
\newpage

\begin{appendices}

\section{Properties of Functions $F_{\rm E}(\lambda)$ and $F_{\rm R}(\xi)$}
In Section IV, we propose the root-finding of monotonic functions $F_{\rm E}(\lambda)$ and $F_{\rm R}(\xi)$ via Newton's method, which facilitates the updates of dual variables $\lambda$ and $\xi$. 
In this section, we provide rigorous analysis of these two functions. 
\subsubsection{The Root Existence and Monotonicity of $F_{\rm E}(\lambda)$}
We have
\begin{align*}
F_{\rm E}(0)&=\sum\limits_i q_i\log\dfrac{a_i}{q_i}-\zeta\Delta-R \\\
&< -\zeta\Delta+\sum\limits_i q_i\log\dfrac{a_i}{q_i}-R(\Delta,\bdq) \\
&\leq R(\Delta,\bdq)-R(\Delta,\bdq)=0.
\end{align*}
The first inequality holds because $R>R(\Delta,\bdq)$, otherwise by the definition of Marton's error exponent \eqref{marton-error-exponent}, $\bdq$ will be a feasible solution to problem \eqref{er-nested}, and the non-negativity of Kullback-Leibler divergence will imply that $\bdq$ is indeed the optimal solution. 
The second inequality follows directly from the dual definition of the RD function \eqref{dual-rate-distortion}. 
On the other hand, we have
\begin{align*}
F_{\rm E}'(\lambda)&=\dfrac{1}{(1+\lambda)^3}\dfrac{I_0I_2-I_1^2}{I_0^2}\geq 0, \\
I_k&=\sum\limits_i e^{\frac{\lambda\log a_i+\log q_i}{\lambda+1}}\left(\log\frac{a_i}{q_i}\right)^k,\quad k=0,1,2,
\end{align*}
where the inequality holds from Cauchy-Schwarz's inequality. 
Thus, we have shown that $F_{\rm E}(0)<0$ and $F_{\rm E}'(\lambda)\geq 0$, establishing the feasibility of Newton's method for root-finding. 

\subsubsection{The Monotonicity of $F_{\rm R}(\xi)$}
We have
\begin{align*}F_{\rm R}'(\xi)&=-\dfrac{1}{(1+\xi)^3}\dfrac{I_0I_2-I_1^2}{I_0^2}\leq 0, \\
I_k&=\sum\limits_i e^{\frac{\log a_i+\xi\log q_i}{1+\xi}}\left(\log\dfrac{a_i}{q_i}\right)^k,\quad k=0,1,2,
\end{align*}
where the inequality holds from Cauchy-Schwarz's inequality, hence $F_{\rm R}(\xi)$ is monotonic. 
Thus, we can employ Newton's method for root-finding in the case of $F_{\rm R}(0)>0$.


\section{Proof of Theorem 1 and 2}
We provide proofs of Theorem 1 and 2. 
Suppose the BA algorithm accurately solves the inherit RD problem and outputs a solution $(\bdw,\bdr)$, and define some auxiliary variables $u_{ij}^{(n)}=w_{ij}^{(n)}r_j^{(n)}$, $v_{ij}^{(n)}=u_{ij}^{(n)}/p_i^{(n)}$. 
Denote the $j$-th column of $\bdv$ as $\bdv_j$. 

\subsubsection{Proof of Theorem 1}
We have 
\[\begin{aligned}
a_i^{(n)}&=\dfrac{p_i^{(n)}}{\sum\limits_j e^{-\zeta d_{ij}}r_j^{(n)}}=\dfrac{p_i^{(n)}w_{ij}^{(n)}}{e^{-\zeta d_{ij}}r_j^{(n)}} =e^{\zeta d_{ij}}v_{ij}^{(n)},~\forall j.
\end{aligned}\]
So we have
{\fontsize{9pt}{12pt}\selectfont
\begin{align*}
&f_{\rm E}(\bdp^{(n+1)})-f_{\rm E}(\bdp^{(n)}) \\
=&\sum\limits_i p_i^{(n)}(\log p_i^{(n)}-\log q_i)-\sum\limits_i p_i^{(n+1)}(\log p_i^{(n+1)}-\log q_i) \\
=&\sum\limits_i p_i^{(n)}(\log p_i^{(n)}-\log q_i)-\sum\limits_i p_i^{(n+1)}(\log p_i^{(n+1)}-\log q_i) \\
&-\lambda^{(n+1)}\left(\zeta\Delta-\sum\limits_i p_i^{(n+1)}(\log a_i^{(n)}-\log p_i^{(n+1)})\right) \\
&+\lambda^{(n+1)}\left(\zeta\Delta-\sum\limits_i p_i^{(n)}(\log a_i^{(n-1)}-\log p_i^{(n)})\right) \\
=&\sum\limits_i p_i^{(n)}\left((\lambda^{(n+1)}+1)\log p_i^{(n)}-\lambda^{(n+1)}\log a_i^{(n-1)}-\log q_i\right) \\
-&\sum\limits_i p_i^{(n+1)}\Big((\lambda^{(n+1)}+1)\log p_i^{(n+1)}-\lambda^{(n+1)}\log a_i^{(n)}-\log q_i\Big) \\
=&\sum\limits_i p_i^{(n)}\left((\lambda^{(n+1)}+1)\log p_i^{(n)}-\lambda^{(n+1)}\log a_i^{(n)}-\log q_i\right) \\
&-(\lambda^{(n+1)}+1)\sum\limits_i p_i^{(n+1)}\log\left(\sum\limits_i e^{\frac{\lambda^{(n+1)}\log a_i^{(n)}+\log q_i}{\lambda^{(n+1)}+1}}\right) \\
&-\lambda^{(n+1)}\sum\limits_i p_i^{(n)}(\log a_i^{(n-1)}-\log a_i^{(n)}) \\
=&\sum\limits_i p_i^{(n)}\left((\lambda^{(n+1)}+1)\log p_i^{(n)}-\lambda^{(n+1)}\log a_i^{(n)}-\log q_i\right) \\
&-(\lambda^{(n+1)}+1)\sum\limits_i p_i^{(n)}\log\left(\sum\limits_ie^{\frac{\lambda^{(n+1)}\log a_i^{(n)}+\log q_i}{\lambda^{(n+1)}+1}}\right) \\
&-\lambda^{(n+1)}\sum\limits_i p_i^{(n)}(\log a_i^{(n-1)}-\log a_i^{(n)}) \\
=&(\lambda^{(n+1)}+1)\sum\limits_i p_i^{(n)}\log\frac{p_i^{(n)}}{p_i^{(n+1)}} \\
&\quad+\lambda^{(n+1)}\sum\limits_{i,j} p_i^{(n)}w_{ij}^{(n)}\log\frac{a_i^{(n)}}{a_i^{(n-1)}} \\
=&(\lambda^{(n+1)}+1)D(\bdp^{(n)}\Vert\bdp^{(n+1)})+\lambda^{(n+1)}\sum\limits_{i,j} r_j^{(n)}v_{ij}^{(n)}\log\dfrac{v_{ij}^{(n)}}{v_{ij}^{(n-1)}} \\
=&(\lambda^{(n+1)}+1)D(\bdp^{(n)}\Vert\bdp^{(n+1)})+\lambda^{(n+1)}\sum\limits_j r_j^{(n)}D(\bdv_j^{(n)}\Vert\bdv_j^{(n-1)}) \\
\geq& D(\bdp^{(n)}\Vert\bdp^{(n+1)}) \geq\dfrac{1}{2}\Vert\bdp^{(n)}-\bdp^{(n+1)}\Vert_1^2. 
\end{align*}
}
The last inequality is due to Pinsker's inequality. 
Since $\{f_{\rm E}(\bdp^{(n)})\}$ is increasing and upper bounded, we have
\[\dfrac{1}{2}\sum\limits_{n=1}^\infty\Vert\bdp^{(n)}-\bdp^{(n+1)}\Vert_1^2\leq\sum\limits_{n=1}^\infty\left(f(\bdp^{(n+1)})-f(\bdp^{(n)})\right)<+\infty,\]
so $\{\bdp^{(n)}\}$ converges to some limit point $\bdp^\star$. 
Next, noticing that $\{(\bdw^{(n)},\bdr^{(n)})\}$ is bounded, it has a subsequence $\{(\bdw^{(n_k)},\bdr^{(n_k)})\}$ converging to some $(\bdw^\star,\bdr^\star)$. 
The feasibility ensured by the BA algorithm implies that $r_j^\star=\sum\limits_i p_i^\star w_{ij}^\star$. 
The iterative scheme of $\bda$ implies that $\bda^{(n_k)}\to\bda^\star$ where $a_i^\star=p_i^\star\Bigg/\left(\sum\limits_j e^{-\zeta d_{ij}}r_j^\star\right)$. 
The convergence behavior together with the update scheme ensures that $(\bdp^\star,\bda^\star)$ is a feasible solution. 

The update scheme ensures that $(\bdp^{(n)},\bda^{(n)})$ satisfies the KKT condition in each iteration. 
The joint convexity of Kullback-Leibler divergence implies that problem \eqref{er-composite} is jointly convex with respect to $(\bdp,\bda)$, hence $(\bdp^\star,\bda^\star)$ is a global optimizer of \eqref{er-composite} when $\zeta$ is fixed.

\subsubsection{Proof of Theorem 2}
We have
{\fontsize{9pt}{12pt}\selectfont
\begin{align*}
&f_{\rm R}(\bdp^{(n+1)},\bda^{(n)})-f_{\rm R}(\bdp^{(n)},\bda^{(n)}) \\
=&\sum\limits_i p_i^{(n+1)}(\log a_i^{(n)}-\log p_i^{(n+1)})-\sum\limits_i p_i^{(n)}(\log a_i^{(n)}-\log p_i^{(n)}) \\
&-\xi^{(n+1)}\sum\limits_i p_i^{(n+1)}(\log p_i^{(n+1)}-\log q_i) \\
&+\xi^{(n+1)}\sum\limits_i p_i^{(n)}(\log p_i^{(n)}-\log q_i) \\
=&-\sum\limits_i p_i^{(n+1)}\left((\xi^{(n+1)}+1)\log p_i^{(n+1)}-\log a_i^{(n)}-\xi^{(n+1)}\log q_i\right) \\
&+\sum\limits_i p_i^{(n)}\left((\xi^{(n+1)}+1)\log p_i^{(n+1)}-\log a_i^{(n)}-\xi^{(n+1)}\log q_i\right) \\
&+(\xi^{(n+1)}+1)\sum\limits_i p_i^{(n)}(\log p_i^{(n)}-\log p_i^{(n+1)}) \\
=&-(\xi^{(n+1)}+1)\sum\limits_i (p_i^{(n+1)}-p_i^{(n)})\log\left(\sum\limits_i e^{\frac{\log a_i^{(n)}+\xi^{(n+1)}\log q_i}{1+\xi^{(n+1)}}}\right) \\
&+(\xi^{(n+1)}+1) D(\bdp^{(n)}\Vert\bdp^{(n+1)})=(\xi^{(n+1)}+1) D(\bdp^{(n)}\Vert\bdp^{(n+1)}). \\
&f_{\rm R}(\bdp^{(n+1)},\bda^{(n+1)})-f_{\rm R}(\bdp^{(n+1)},\bda^{(n)}) \\
=&\sum\limits_i p_i^{(n+1)}(\log a_i^{(n+1)}-\log a_i^{(n)}) \\
=&\sum\limits_{i,j} p_i^{(n+1)}w_{ij}^{(n+1)}\log\dfrac{v_{ij}^{(n+1)}}{v_{ij}^{(n)}}=\sum\limits_j r_j^{(n+1)}D(\bdv_j^{(n+1)}\Vert\bdv_j^{(n)}).
\end{align*}
}
In conclusion, we have
\[\begin{aligned}
&f_{\rm R}(\bdp^{(n+1)},\bda^{(n+1)})-f_{\rm R}(\bdp^{(n)},\bda^{(n)}) \\
=&(\xi^{(n+1)}+1)D(\bdp^{(n)}\Vert\bdp^{(n+1)})+\sum\limits_j r_j^{(n+1)}D(\bdv_j^{(n+1)}\Vert\bdv_j^{(n)}) \\
\geq&D(\bdp^{(n)}\Vert\bdp^{(n+1)})\geq\dfrac{1}{2}\Vert\bdp^{(n)}-\bdp^{(n+1)}\Vert_1^2.
\end{aligned}\]
Similar to the above analysis, since $\{f_{\rm R}(\bdp^{(n)},\bda^{(n)})\}$ is increasing and upper bounded, the sequence $\{\bdp^{(n)}\}$ converges to some limit point $\bdp^\star$. 
A parallel argument to that developed the preceding subsection shows that $(\bdp^\star,\bda^\star)$ globally optimizes problem \eqref{re-composite} with fixed $\zeta$. 
%

%
%

\end{appendices}

\end{document}